\definecolor{Red}{rgb}{1,0,0}
\definecolor{Blue}{rgb}{0,0,1}
\definecolor{Olive}{rgb}{0.41,0.55,0.13}
\definecolor{Green}{rgb}{0,1,0}
\definecolor{MGreen}{rgb}{0,0.8,0}
\definecolor{DGreen}{rgb}{0,0.55,0}
\definecolor{Yellow}{rgb}{1,1,0}
\definecolor{Cyan}{rgb}{0,1,1}
\definecolor{Magenta}{rgb}{1,0,1}
\definecolor{Orange}{rgb}{1,.5,0}
\definecolor{Violet}{rgb}{.5,0,.5}
\definecolor{Purple}{rgb}{.75,0,.25}
\definecolor{Brown}{rgb}{.75,.5,.25}
\definecolor{Grey}{rgb}{.5,.5,.5}
\definecolor{Black}{rgb}{0,0,0}
\def\path{{\tt path}}
\newcommand{\ecal}{\mathcal{E}}
\newcommand{\gcal}{\mathcal{G}}
\newcommand{\lcal}{\mathcal{L}}
\newcommand{\vcal}{\mathcal{V}}
\newcommand{\eps}{\varepsilon}
\newcommand{\bdm}{\begin{displaymath}}
\newcommand{\edm}{\end{displaymath}}
\newcommand{\bea}{\begin{eqnarray*}}
\newcommand{\eea}{\end{eqnarray*}}
\newcommand{\bean}{\begin{eqnarray}}
\newcommand{\eean}{\end{eqnarray}}
\newcommand{\prob}{\mathbb{P}}
\newtheorem{proposition}{Proposition}
\newenvironment{proof}{\noindent{\textbf{Proof:}}}{$\blacksquare$\vskip\belowdisplayskip}
\newcommand{\itemname}[1]{$\mathrm{[#1]}$}
\newcommand{\alleles}[2]{\mathcal{M}^{(#1)}_{#2}}
\newcommand{\loci}{\mathcal{I}}
\newcommand{\divtime}[2]{\mathcal{D}^{(#1)}_{#2}}
\newcommand{\divt}[1]{\mathcal{D}_{#1}}
\newcommand{\mrca}[1]{\langle #1 \rangle}
\newcommand{\desc}[1]{\lfloor #1 \rfloor}
\newcommand{\undert}[1]{\underline{t}_{#1}}
\newcommand{\overt}[1]{\overline{t}_{#1}}
\newcommand{\leftd}[1]{{#1}_{\swarrow}}
\newcommand{\rightd}[1]{{#1}_{\searrow}}
\newcommand{\estdivtime}[2]{\widehat{\mathcal{D}}^{(#1)}_{#2}}
\newcommand{\estdivt}[1]{\widehat{\mathcal{D}}_{#1}}
\newcommand{\mincot}{\mu}
\newcommand{\mint}{m}
\newcommand{\consrate}{\Lambda}
\begin{document}

\title{\vspace{-3cm} Incomplete Lineage Sorting: 
Consistent Phylogeny Estimation From Multiple Loci\thanks{
Keywords: incomplete lineage sorting, gene tree, species tree, coalescent, topological concordance, statistical consistency.
E.M. is supported by an Alfred Sloan fellowship in
Mathematics and by NSF grants DMS-0528488, and DMS-0548249 (CAREER) and by 
ONR grant N0014-07-1-05-06.
}
}
\author{
{\bf Elchanan Mossel}\\
       {Department of Statistics}\\
       {University of California, Berkeley}\\
       {\small \texttt{mossel@stat.berkeley.edu}}\\
       {}\\
{\bf Sebastien Roch}\\
       {Theory Group}\\
       {Microsoft Research}\\
       {\small \texttt{Sebastien.Roch@microsoft.com}}
}
\maketitle

\begin{abstract}
We introduce a simple algorithm for reconstructing phylogenies
from multiple gene trees in the presence of incomplete lineage sorting,
that is, when the topology of the gene trees may differ
from that of the species tree. We show that our technique is statistically
consistent under standard stochastic assumptions,
that is, it returns the correct tree given sufficiently many unlinked loci.
We also show that it can tolerate moderate estimation errors.
\end{abstract}

\section{Introduction}

Phylogenies---the evolutionary relationships of a group of species---are 
typically inferred from estimated genealogical histories of one or several genes 
(or \emph{gene trees})~\cite{Felsenstein:04,SempleSteel:03}.  
Yet it is well known that such gene trees may provide misleading information
about the phylogeny (or \emph{species tree}) containing them. 
Indeed, it was observed early on that 
a gene tree may be topologically 
inconsistent with its species tree, 
a phenomenon known as \emph{incomplete lineage sorting}. 
See e.g.~\cite{Maddison:97,Nichols:01,Felsenstein:04} and references therein.
Such discordance plays little role in the reconstruction of deep phylogenetic branchings
but it is critical in the study of recently diverged populations~\cite{KnowlesMaddison:02,
HeyMachado:03,Knowles:04}.  

Two common approaches to deal with this issue are \emph{concatenation} and
\emph{majority voting}. In the former, one concatenates the sequences originating 
from several genes and hopes that a tree inferred from the combined data will produce 
a better estimate. 
This approach appears to give poor results~\cite{KubatkoDegnan:07}. Alternatively,
one can infer multiple gene trees and output the most common reconstruction
(that is, take a majority vote). This is also often doomed to failure. 
Indeed, a recent, striking result of Degnan and Rosenberg~\cite{DegnanRosenberg:06}
shows that, under appropriate conditions, the \emph{most likely} gene tree may be
inconsistent with the species tree; and this situation may arise on \emph{any} topology
with at least $5$ species. See also~\cite{PamiloNei:88,Takahata:89} for related
results.

Other techniques are being explored that attempt to address incomplete lineage sorting,
notably Bayesian~\cite{EdLiPe:07} and likelihood~\cite{SteelRodrigo:07} methods.
However the problem is still far from being solved as discussed 
in~\cite{MaddisonKnowles:06}. Here we propose
a simple technique---which we call Global LAteSt Split or GLASS---for estimating species trees 
from multiple genes (or \emph{loci}). Our technique 
develops some of the ideas of Takahata~\cite{Takahata:89} and Rosenberg~\cite{Rosenberg:02} 
who studied the properties of gene trees in terms of the corresponding 
species tree.
In our main result, we show that GLASS is \emph{statistically consistent}, that is, 
it always returns the correct topology given sufficiently many (unlinked) genes---thereby 
avoiding the pitfalls highlighted in~\cite{DegnanRosenberg:06}. We also obtain explicit 
convergence rates under a standard model based on Kingman's coalescent~\cite{Kingman:82}.
Moreover, we allow the use of several alleles from each population and we show how
our technique leads to an extension of Rosenberg's \emph{topological concordance}~\cite{Rosenberg:02}
to multiple loci.

We note the recent results of Steel and Rodrigo~\cite{SteelRodrigo:07} who
showed that Maximum Likelihood (ML) is statistically consistent under slightly different assumptions.
An advantage of GLASS over likelihood (and Bayesian) methods 
is its computational efficiency, as no efficient algorithm for finding ML trees is known. 
Furthermore,  GLASS gives explicit convergence rates---useful in assessing the quality of the reconstruction.

For more background on phylogenetic inference and coalescent theory, 
see e.g.~\cite{Felsenstein:04, SempleSteel:03, HeScWi:05, Nordborg:01, Tavare:04}.

\paragraph{Organization.} The rest of the paper is organized as follows. We begin in
Section~\ref{section:setup} with a description of the basic setup. The GLASS method
is introduced in Section~\ref{section:glass}. A proof of its consistency can be found
in Sections~\ref{section:combinatorial} and~\ref{section:consistency}. 
We show
in Section~\ref{section:noisy}
that GLASS remains consistent under moderate estimation errors.
Finally in Section~\ref{section:generalization} 
we do away with the molecular
clock assumption and we show how our technique can be used in conjunction 
with any distance matrix method.

\section{Basic Setup}\label{section:setup}

We introduce our basic modelling assumptions. See e.g.~\cite{DegnanRosenberg:06}.

\paragraph{Species tree.} Consider $n$ isolated populations with a common
evolutionary history given by the \emph{species tree} $S = (V,E)$ 
with leaf set $L$. Note that $|L| = n$. For each branch $e$ of $S$,
we denote: 
\begin{itemize}
\item $N_e$, the (haploid) population size on $e$ 
(we assume that the population size remains constant along the branch);

\item $t_e$, the number of generations encountered on $e$;

\item $\tau_e = \frac{t_e}{2N_e}$, the length of $e$ in standard
coalescent time units;

\item $\mincot = \min_e \tau_e$, the shortest branch length in $S$.
\end{itemize}
The model does not allow migration between contemporaneous populations.
Often in the literature, the population sizes $\{N_e\}_{e\in E}$, are taken to be equal
to a constant $N$. Our results are valid in a more general setting.   
%A species tree on $n$ populations experiences $n-1$ divergences.
%We denote $t_1\leq t_2\leq\cdots\leq t_{n-1}$ the divergence times in number of
%generations going backwards in time.

\paragraph{Gene trees.}
We consider $k$ \emph{loci} $\loci$.
For each population $l$ and each locus $i$, we sample a set of alleles $\alleles{i}{l}$.
Each locus $i\in \loci$ has a genealogical history represented by a 
\emph{gene tree} $\gcal^{(i)} = (\vcal^{(i)}, \ecal^{(i)})$
with leaf set $\lcal^{(i)} = \cup_l \alleles{i}{l}$.  
%For each branch $e$ in $\gcal^{(i)}$ we denote by $\lambda^{(i)}_e$ the number of generations
%along $e$.
For two leaves $a,b$ in $\gcal^{(i)}$, we let $\divtime{i}{ab}$ be
the time in number of generations to the most recent common ancestor of $a$ and $b$ in $\gcal^{(i)}$.
%, that is,
%\begin{equation*}
%d^{(i)}_{ab} = \frac{1}{2}\sum_{e\in\mathrm{P}^{(i)}(a,b)} \lambda^{(i)}_e,
%\end{equation*}
%where $\mathrm{P}^{(i)}(a,b)$ is the path (set of edges) between $a$ and $b$ in 
%gene tree $\gcal^{(i)}$. 
Following~\cite{Takahata:89, Rosenberg:02}
we are actually interested in \emph{interspecific} coalescence times. Hence,
we define, for all $r,s \in L$,
\begin{equation*}
\divtime{i}{rs} = \min\left\{
\divtime{i}{ab} \ :\ a\in \alleles{i}{r}, b\in \alleles{i}{s}
\right\}.
\end{equation*}
  
\paragraph{Inference problem.} We seek to solve the following inference problem.
We are given $k$ gene trees as above, including accurate estimates
of the coalescence times 
\begin{equation*}
\left\{\left(\divtime{i}{ab}\right)_{a,b\in \lcal^{(i)}}\right\}_{i\in \loci}. 
\end{equation*}
Our goal is to infer the species tree $S$. 

\paragraph{Stochastic Model.}
In Section~\ref{section:combinatorial}, we will first state 
the correcteness of our inference algorithm in terms 
of a combinatorial property of the gene trees. 
In Section~\ref{section:consistency}, we will then show that 
under the following standard stochastic assumptions, this property 
holds for a moderate number of genes. 

Namely, we will assume that each gene tree $\gcal^{(i)}$ is distributed
according to a standard \emph{coalescent process}: looking backwards in time, 
in each branch any two alleles coalesce at exponential rate 1 independently of 
all other pairs; whenever two populations merge in the species tree, 
we also merge the allele sets of the corresponding populations
(that is, the coalescence proceeds on the \emph{union} of both allele sets). 
We further assume that the $k$ loci $\loci$ are \emph{unlinked}
or in other words that the gene trees $\{\gcal^{(i)}\}_{i\in \loci}$ are mutually
independent. 

Under these assumptions, an inference algorithm
is said to be \emph{statistically consistent} if the probability
of returning an incorrect reconstruction goes to 0 as $k$ tends to
$+\infty$.

\section{Species Tree Estimation}\label{section:glass}

We introduce a technique which we call the
Global LAteSt Split (GLASS) method.

\paragraph{Inference method.}
Consider first the case of a single gene ($k=1$).
Looking backwards in time, the first speciation occurs at some time
$T_1$, say between populations $r_1$ and $s_1$. It is well known that,
for any sample $a$ from $\alleles{1}{r_1}$ and $b$ from $\alleles{1}{s_1}$, the coalescence time
$\divtime{1}{ab}$ between alleles $a$ and $b$ \emph{overestimates}
the divergence time of the populations. As noted in~\cite{Takahata:89}, 
a better estimate of $T_1$ can be obtained
by taking the smallest interspecific coalescence time between
alleles in $\alleles{1}{r_1}$ and in $\alleles{1}{s_1}$, that is, by considering instead
$\divtime{1}{r_1 s_1}$. 

The inference then proceeds as follows. 
First, cluster the two populations, say $r_1$ and $s_1$,
with smallest interspecific coalescence time $\divtime{1}{r_1 s_1}$. 
Define the coalescence time of two clusters $A, B \subseteq L$ as the
minimum interspecific coalescence time between populations in $A$ and in $B$, that is,
\begin{equation*}
\divtime{1}{AB} = \min\left\{
\divtime{1}{rs}\ :\ r\in A, s\in B
\right\}.
\end{equation*}
Then, repeat as above until there is only one cluster left.
This is essentially the algorithm proposed by Rosenberg~\cite{Rosenberg:02}.
In particular, Rosenberg calls the implied topology on the populations so obtained
the \emph{collapsed gene tree}.

How to extend this algorithm to $k>1$? 
As we discussed earlier, one could infer a gene tree as above for each locus and
take a majority vote---but this approach fails~\cite{DegnanRosenberg:06};
in particular, it is generally not statistically consistent.

Another natural idea is to get a ``better'' estimate of coalescence times by \emph{averaging}
across loci. This leads to the 
Shallowest Divergence Clustering method of Maddison and Knowles~\cite{MaddisonKnowles:06}.
We argue that a better choice is, instead, to take the \emph{minimum} across loci.
In other words, we apply the clustering algorithm above to the quantity
\begin{equation*}
\divt{AB} = \min\left\{
\divtime{i}{AB}\ :\ i\in \loci
\right\},
\end{equation*}
for all $A, B \subseteq L$ with $A\cap B = \emptyset$. 
The reason we consider the minimum is similar to the case
of one locus and several samples per population above: it suffices to have
\emph{one} pair $a \in \alleles{i}{r}$, $b \in \alleles{i}{s}$ (for some $i$) with coalescence time
$T$ across \emph{all pairs of samples in populations $r$ and $s$ (one from each)} 
and \emph{all loci in $\loci$} to provide
indisputable evidence that the corresponding species branch
before time $T$ (looking backwards in time). In a sense, 
we build the ``minimal'' tree on $L$ that is ``consistent'' with
the evidence provided by the gene trees.
This type of approach is briefly discussed by Takahata~\cite{Takahata:89} in the simple case of three
populations (where the issues raised by~\cite{DegnanRosenberg:06} do not arise). 

The algorithm, which we name GLASS, is detailed in Figure~\ref{figure:glass}.
We call the tree so obtained the \emph{glass tree}.
We show in the next section that GLASS is in fact statistically consistent.
\begin{figure*}[h]
\framebox{
\begin{minipage}{12.2cm}
{\small \textbf{Algorithm} GLASS\\
\textit{Input:} Gene trees $\{\gcal^{(i)}\}_{i\in \loci}$ and coalescence times
$\divtime{i}{ab}$ for all $i\in \loci$ and $a,b\in \lcal^{(i)}$;\\
\textit{Output:} Estimated topology $S'$;
\begin{itemize}
\item \itemname{Intercluster\ coalescences} For all $A, B \subseteq L$ with
$A\cap B = \emptyset$, compute
\begin{equation*}
\divt{AB} = \min\left\{
\divtime{i}{ab}\ :\ i\in \loci, r\in A, s\in B, a\in \alleles{i}{r}, b\in \alleles{i}{s}
\right\};
\end{equation*}
\item \itemname{Clustering} Set 
$Q := \{\{r\}\ :\ r\in L\}$;
Until $|Q| = 1$:
\begin{itemize}
\item Denote the current partition $Q = \{A_1,\ldots,A_z\}$;
\item Let $A', A''$ minimize $\divt{AB}$ over all pairs $A,B \in Q$ (break ties arbitrarily);
\item Merge $A'$ and $A''$ in $Q$;
\end{itemize}
\item \itemname{Output} Return the topology implied by the steps above.
\end{itemize}
}
\end{minipage}
} \caption{Algorithm GLASS.} \label{figure:glass}
\end{figure*}

\paragraph{Multilocus concordance.} A gene tree with one sample per population
is said to be \emph{concordant} (sometimes also ``congruent'' or ``consistent'')
with a species tree if their (leaf-labelled) topologies agree. When the number
of samples per population is larger than one, one cannot directly compare
the topology of the gene tree with that of the species tree since they contain a different
number of leaves. Instead, Rosenberg~\cite{Rosenberg:02} defines a gene tree to be 
\emph{topologically concordant} with a species tree if the \emph{collapsed gene tree}
(see above) coincides with the species tree. 

We extend Rosenberg's definition to multiple loci. We say that a collection
of gene trees $\{\gcal^{(i)}\}_{i\in \loci}$ is \emph{multilocus concordant} 
with a species tree $S$ if the \emph{glass tree} agrees with the species tree.
Therefore, to prove that GLASS is statistically consistent, it suffices to show that
the probability of multilocus concordance goes to $1$ as the number of loci
goes to $+\infty$.

\section{Sufficient Conditions}\label{section:combinatorial}

In this section, we state a simple combinatorial condition
guaranteeing that GLASS returns the correct species tree. Our condition is
an extension of Takahata's condition in the case of a single gene~\cite{Takahata:89}.
See also~\cite{Rosenberg:02}.

As before, let $S$ be a species tree and $\{\gcal^{(i)}\}_{i\in I}$ a collection
of gene trees. For a subset of leaves $A \subseteq L$, denote by $\mrca{A}$ the most
recent common ancestor (MRCA) of $A$ in $S$. 
For a (internal or leaf) node $v$ in $S$, we use the following notation:
\begin{itemize}
\item $\desc{v}$ are the descendants of $v$ in $L$; 
\item $\undert{v}$ is the time elapsed in number of generations
between $v$ and $\desc{v}$; 
\item $\overt{v}$ is the time elapsed in number of generations
between the immediate ancestor of $v$ and $\desc{v}$.
\end{itemize}
In particular, note that if $e$ is the branch immediately above $v$, then we have
\begin{equation*}
t_e = \overt{v} - \undert{v}.
\end{equation*}
Also, we call the subtree below $v$, \emph{clade} $v$.

Our combinatorial condition can be stated as follows: 
\begin{equation*}
(\star) \quad \forall u,v \in V,\ \undert{\mrca{\desc{u}\cup \desc{v}}} 
\leq \divt{\desc{u} \desc{v}} 
< \overt{\mrca{\desc{u}\cup \desc{v}}}.
\end{equation*}  
In words, for any two clades $u$, $v$, there is at least one locus $i$ and one pair
of alleles $a, b$ with $a$ from clade $u$ and $b$ from clade $v$ 
such that the lineages of $a$ and $b$ coalesce before the end of the branch above the MRCA of $u$ and $v$.
(The first inequality is clear by construction.)
By the next proposition, condition $(\star)$ is sufficient for multilocus concordance. 
Note, however, that it is not necessary. Nevertheless note that, by design, GLASS always returns
a tree, even when the condition is not satisfied.
\begin{proposition}[Sufficient Condition]\label{thm:sufficient}
Assume that $(\star)$ is satisfied. Then, GLASS returns the correct species tree.
In other words, the gene trees $\{\gcal^{(i)}\}_{i\in \loci}$ are multilocus concordant
with the species tree $S$.
\end{proposition}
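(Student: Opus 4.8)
The plan is to show, by induction on the number of merges, that the current partition $Q$ maintained by GLASS always consists of \emph{clades} of $S$ (sets of the form $\desc{v}$), and that each merge unites two clades into a larger clade $\desc{w}$. Since a (binary) species tree is determined by its set of clades, this immediately yields that GLASS returns $S$. Two preliminary observations set up the argument. First, because $\divt{AB}$ is a minimum over all underlying allele pairs, it behaves like a single-linkage distance: for disjoint clusters, $\divt{(A\cup A')B} = \min\{\divt{AB},\divt{A'B}\}$, so the quantity GLASS minimizes at each step is exactly the single-linkage distance between current clusters. Second, I would rewrite $(\star)$ in the convenient form: for disjoint clades $A=\desc{u}$ and $B=\desc{v}$ with $w=\mrca{\desc{u}\cup\desc{v}}$, the value $\divt{AB}$ lies in the half-open interval $[\undert{w},\overt{w}) = [\undert{w},\undert{\parent(w)})$ associated with the branch of $S$ immediately above $w$.

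The base case is immediate, as $Q$ starts as the set of leaf clades. For the inductive step, suppose $Q=\{\desc{v_1},\dots,\desc{v_z}\}$ is a partition of $L$ into clades, and let $A'=\desc{u}$, $A''=\desc{u'}$ be a pair minimizing $\divt{\cdot\,\cdot}$, with MRCA $w^\ast=\mrca{\desc{u}\cup\desc{u'}}$. The lower bound in $(\star)$ gives $\divt{A'A''}\ge\undert{w^\ast}$. I claim $\desc{u}\cup\desc{u'}=\desc{w^\ast}$, i.e.\ the pair forms a clade. Suppose not; then some leaf below $w^\ast$ lies in a third cluster $A_l=\desc{v_l}$ disjoint from $A'\cup A''$, and $v_l$ must be a \emph{strict} descendant of $w^\ast$ (an ancestor or incomparable $v_l$ would force $\desc{v_l}$ to meet or contain $A'$, contradicting disjointness). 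Since $w^\ast$ is binary with children $c_1,c_2$ and $u,u'$ sit in different children, $v_l$ shares a child, say $c_1$, with $u$. Then $w'=\mrca{\desc{u}\cup\desc{v_l}}$ satisfies $w'\preceq c_1\prec w^\ast$, so $\undert{\parent(w')}\le\undert{w^\ast}$, and the upper bound in $(\star)$ yields $\divt{A'A_l} < \undert{\parent(w')}\le\undert{w^\ast}\le\divt{A'A''}$. This strictly contradicts the minimality of $(A',A'')$; hence the pair forms the clade $\desc{w^\ast}$, and the merged partition again consists of clades.

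Since every merge, including any tie-broken choice (the contradiction shows that \emph{no} non-clade-forming pair can attain the minimum), produces a clade of $S$, after $n-1$ merges GLASS has generated exactly the internal clades of $S$, and the returned topology coincides with $S$. The crux---and the step I expect to require the most care---is the combinatorial contradiction in the inductive step: verifying that whenever the minimizing pair fails to be a sibling clade there is always a genuinely closer competing pair, and handling the bookkeeping around ties and around a third cluster interrupting the intended merge. I would also note that the argument as sketched uses that $S$ is binary at $w^\ast$; for multifurcating species trees the same interval bookkeeping recovers $S$ up to the (length-zero) resolution of polytomies, which is the appropriate notion of topological agreement.
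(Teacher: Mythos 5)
Your proof is correct and takes essentially the same route as the paper's: an induction over the merges performed by GLASS maintaining the invariant that every set in the current partition is a clade of $S$, with the contradiction for a non-clade merge derived from exactly the same chain of inequalities, $\divt{A'A_l} < \overt{\mrca{A'\cup A_l}} \leq \undert{w^*} \leq \divt{A'A''}$, using both bounds in $(\star)$. The differences are only cosmetic: you establish the claim for the minimizing pair before it is merged (and add the single-linkage observation and the multifurcating aside), whereas the paper states the same invariant for the newly created set after the merge and locates the interfering third cluster via a strict inclusion $B'\subset C'$ rather than via a leftover leaf below $w^*$.
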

\begin{proof}
Let $Q$ be one of the partitions obtained by GLASS along its execution and let 
$B$ be the newly created set in $Q$. We claim that, under $(\star)$, it must be the case that
\begin{equation}\label{eq:alpha}
B = \desc{\mrca{B}}.
\end{equation} 
That is, $B$ is the set of leaves of a clade in the species tree $S$. 
The proposition follows immediately from this
claim.

We prove the claim by induction on the execution time of the algorithm. 
Property (\ref{eq:alpha}) is trivially true initially. 
Assume the claim holds up to time $T$ and let $Q$, as above, be the partition at time $T+1$. 
Note that $B$ is obtained
by merging two sets $B'$ and $B''$ forming a partition of $B$. By induction,
$B'$ and $B''$ satisfy (\ref{eq:alpha}). 
Now, suppose by contradiction that $B$
does not satisfy (\ref{eq:alpha}). 
Let $\leftd{\mrca{B}}$ and $\rightd{\mrca{B}}$ be the clades immediately below $\mrca{B}$ 
with corresponding
leaf sets $C' = \desc{\leftd{\mrca{B}}}$ and $C'' = \desc{\rightd{\mrca{B}}}$. 
By our induction hypothesis, each of $B'$ and $B''$
must be contained in one of $C'$ or $C''$. Say $B' \subseteq C'$ and
$B'' \subseteq C''$ without loss of generality. 
Moreover, since $B$ does not satisfy (\ref{eq:alpha}),
one of the inclusions is strict, say $B' \subset C'$. But by $(\star)$, any set
$X$ in $Q$ containing an element of $C'-B'$ has
\begin{equation}\label{eq:dbx}
\divt{B'X} 
< \overt{\mrca{B' \cup X}}
\leq \overt{\leftd{\mrca{B}}} 
= \undert{\mrca{B}} 
= \undert{\mrca{B'\cup B''}} 
\leq \divt{B'B''}.
\end{equation}
To justify the first two inequalities above, 
note that $X$ is contained in the partition at time $T$ and therefore
satisfies (\ref{eq:alpha}). In particular, by construction
\begin{equation*}
B' \cup X \subseteq C'.
\end{equation*}
Hence by (\ref{eq:dbx}), 
GLASS would not have merged $B'$ and $B''$, a contradiction. 
\end{proof}

\section{Statistical Consistency}\label{section:consistency}

In this section, we prove the consistency of GLASS.

\paragraph{Consistency.}
We prove the following consistency result. Note that the theorem holds
for any species tree---including the ``anomaly zone'' of Degnan and Rosenberg~\cite{DegnanRosenberg:06}.
\begin{proposition}[Consistency]\label{thm:consistency}
GLASS is statistically
consistent.
\end{proposition}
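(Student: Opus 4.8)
The plan is to reduce statistical consistency to showing that condition $(\star)$ from Proposition~\ref{thm:sufficient} holds with probability tending to $1$ as the number of loci $k \to +\infty$. Since $(\star)$ is a \emph{sufficient} condition for multilocus concordance, and GLASS is consistent precisely when the probability of multilocus concordance tends to $1$, it suffices to prove $\prob[(\star) \text{ fails}] \to 0$. The first inequality in $(\star)$ holds deterministically by construction, so the entire probabilistic content lies in the \emph{upper} bound: for every pair of clades $u,v$, we must show that with high probability some interspecific pair coalesces strictly before $\overt{\mrca{\desc{u}\cup\desc{v}}}$, i.e.\ before the top of the branch immediately above their MRCA.

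First I would fix a single pair of clades $u,v$ and analyze the event $E_{u,v} = \{\divt{\desc{u}\,\desc{v}} \geq \overt{\mrca{\desc{u}\cup\desc{v}}}\}$. Looking backwards in time, let $w = \mrca{\desc{u}\cup\desc{v}}$ and consider the branch $e$ immediately above $w$, which has length $\tau_e \geq \mincot$ in coalescent units. Conditioned on the lineages ancestral to $\desc{u}$ and $\desc{v}$ reaching node $w$ without coalescing into a common lineage, the key observation is that within branch $e$ the allele sets have been merged, so any pair of the (at least two) surviving lineages---one from each side---coalesces at exponential rate $1$. The probability that \emph{no} such interspecific coalescence occurs during the branch of length $\tau_e$ is at most $e^{-\tau_e} \leq e^{-\mincot}$, by comparison with a single rate-$1$ clock. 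Hence for a single locus $i$, the conditional probability that locus $i$ fails to witness coalescence on this branch is bounded by some $p < 1$ depending only on $\mincot$ (and the tree geometry), uniformly over the pair $u,v$.

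Because the $k$ loci are \emph{unlinked}, hence mutually independent, the event $E_{u,v}$---which requires \emph{every} locus to fail simultaneously---has probability at most $p^k$. Taking a union bound over the finitely many (at most $O(n^2)$, or more crudely $4^n$) pairs of clades $u,v$, I obtain
\begin{equation*}
\prob[(\star)\text{ fails}] \leq \sum_{u,v} \prob[E_{u,v}] \leq C(n)\, p^k,
\end{equation*}
where $C(n)$ counts the pairs of clades and $p = p(\mincot) < 1$. This tends to $0$ as $k \to +\infty$, which together with Proposition~\ref{thm:sufficient} yields statistical consistency. The same computation in fact delivers the explicit geometric convergence rate advertised in the introduction.

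The main obstacle is making the per-locus bound genuinely clean, in particular handling the conditioning on the coalescent history \emph{below} node $w$. One must argue that regardless of how lineages coalesce within the descendant clades, at least one lineage from each of the two sides survives to enter branch $e$ (since each clade contributes at least one sampled allele, its ancestral material is never lost), so that the rate-$1$ comparison on $e$ is always available; the Markov property of the coalescent backwards in time makes the increment on $e$ independent of the history below $w$. A secondary subtlety is that $\divt{\desc{u}\,\desc{v}}$ is a minimum over loci \emph{and} over all interspecific allele pairs, so even within one locus there are multiple chances to coalesce early---this only \emph{helps} the bound, so it is safe to discard the extra pairs and track a single lineage from each side, keeping $p$ a valid (if loose) upper bound.
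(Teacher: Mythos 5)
Your proof is correct, and its probabilistic core---a tagged interspecific pair of lineages coalesces at rate $1$ on the branch above $\mrca{\desc{u}\cup\desc{v}}$ regardless of the history below, so each unlinked locus independently fails with probability at most $e^{-\mincot}$---is exactly the mechanism the paper uses. Where you differ is in how the per-branch bounds are combined. The paper indexes failure by the $n-1$ divergence events of the species tree, chooses one arbitrary lineage from each side of every branching, and multiplies to get a success probability of $(1-e^{-k\mincot})^{n-1}$; you instead take a union bound over all $O(n^2)$ pairs of clades $(u,v)$, getting $1 - C(n)\,e^{-k\mincot}$. Your bookkeeping is slightly lossier (prefactor $O(n^2)$ versus $n-1$, so the paper's version yields the sharper rate recorded in Proposition~\ref{thm:rate}), but it is arguably more faithful to what Proposition~\ref{thm:sufficient} actually demands: condition $(\star)$ quantifies over \emph{all} pairs of clades, whereas the paper's per-branching event---one chosen pair, per locus, per divergence---only witnesses coalescence for the two full sibling clades meeting at each node; nothing forces the chosen lineage on one side to be ancestral to a given proper sub-clade $u$ of that side, so nested pairs $(u,v)$ may go unwitnessed. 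Turning that weaker per-node condition into correctness of GLASS requires a small extra combinatorial step (e.g.\ a deepest-unmerged-node induction showing that sibling-clade witnesses suffice), which the paper elides by asserting its event implies $(\star)$. Your union bound over all clade pairs verifies $(\star)$ literally and plugs into Proposition~\ref{thm:sufficient} with no further work, at no cost to the consistency statement since $C(n)\,e^{-k\mincot}\to 0$ for fixed $n$ anyway.
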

\begin{proof}
Throughout the proof, time runs \emph{backwards} as is conventional in coalescent theory.
We use Proposition~\ref{thm:sufficient} and give a lower bound on the probability
that condition $(\star)$ is satisfied.

Consider first the case of one locus and one sample per population. 
By $(\star)$, the reconstruction
is correct if every time two populations meet, 
the corresponding alleles coalesce before the end of the branch immediately above.
By classical coalescent calculations (e.g.~\cite{Tavare:84}), this happens with probability
at least
\begin{equation*}
(1 - e^{-\mincot})^{n-1},
\end{equation*}
where we used the fact that there are $n-1$ divergences. 

Now consider the general case. 
Imagine running the coalescent processes of all loci \emph{simultaneously}. 
Consider any branching between
two populations. In every gene tree separately, if several alleles emerge on either sides of the branching, 
choose arbitrarily one allele from each side. 
The probability that the chosen allele pairs fail to coalesce
before the end of the branch above in \emph{all} loci is at most $e^{-k\mincot}$
by independence. Indeed, irrespective of everything else going on, 
two alleles meet at exponential rate 1 (conditionally on the past).
This finally gives a probability of success of at least
\begin{equation*}
(1 - e^{-k\mincot})^{n-1}.
\end{equation*}

For $n$ and $\mincot$ fixed, we get
\begin{equation*}
(1 - e^{-k\mincot})^{n-1} \to 1,
\end{equation*} 
as $k \to +\infty$, as desired.
\end{proof}

\paragraph{Rates.} Implicit in the proof of Proposition~\ref{thm:consistency} is the following
convergence rate.
\begin{proposition}[Rate]\label{thm:rate}
It holds that
\begin{equation*}
\prob[\mathrm{\ Multilocus\ Discordance\ }] \leq (n-1)[e^{-\mincot}]^k.
\end{equation*}
In particular, for any $\eps > 0$, taking
\begin{equation*}
k = \frac{1}{\mincot} \ln\left(\frac{n-1}{\eps}\right),
\end{equation*} 
we get
\begin{equation*}
\prob[\mathrm{\ Multilocus\ Discordance\ }] \leq \eps.
\end{equation*}
\end{proposition}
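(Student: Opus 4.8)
The plan is not to reprove anything but to read the bound directly off the proof of Proposition~\ref{thm:consistency} and make the dependence on $k$ explicit. First I would use Proposition~\ref{thm:sufficient} to replace the event of interest by a combinatorial one: since condition $(\star)$ forces multilocus concordance, multilocus discordance is contained in the event that $(\star)$ fails, so it suffices to bound $\prob[(\star)\text{ fails}]$. The proof of Proposition~\ref{thm:consistency} already organizes this failure around the $n-1$ branchings of $S$: to each internal node $m$, with branch $e_m$ immediately above it, it attaches a single ``bad'' event $\mathcal{B}_m$, namely that the arbitrarily chosen cross-pair of lineages at $m$ fails to coalesce within $e_m$ in \emph{every} one of the $k$ loci, and it shows that $(\star)$ can fail only if some $\mathcal{B}_m$ occurs.

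Next I would bound each $\prob[\mathcal{B}_m]$ and combine by a union bound rather than by the independence product used for consistency. Conditioned on the past, the two chosen lineages coalesce at rate $1$ over a branch of coalescent length at least $\mincot$, so in a single locus they fail to coalesce within $e_m$ with probability at most $e^{-\mincot}$; since the $k$ loci are unlinked, $\prob[\mathcal{B}_m] \le (e^{-\mincot})^k = e^{-k\mincot}$. A union bound over the $n-1$ branchings then gives
\[
\prob[\text{Multilocus Discordance}] \le \prob[(\star)\text{ fails}] \le \sum_{m}\prob[\mathcal{B}_m] \le (n-1)\,e^{-k\mincot} = (n-1)[e^{-\mincot}]^k,
\]
which is the asserted rate. (Equivalently one may keep the form $(1-e^{-k\mincot})^{n-1}$ from the consistency proof and apply Bernoulli's inequality $1-(1-p)^{n-1}\le (n-1)p$; the union bound has the advantage of producing the linear tail directly, without invoking independence across branchings.)

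Finally, for the explicit sample size I would substitute $k = \mincot^{-1}\ln((n-1)/\eps)$ into the rate. Then $k\mincot = \ln((n-1)/\eps)$, whence $e^{-k\mincot} = \eps/(n-1)$ and the bound collapses to $(n-1)\cdot \eps/(n-1)=\eps$; because the right-hand side is nonincreasing in $k$, rounding up to an integer number of loci only sharpens the conclusion. I expect the only non-routine point to be the decomposition in the first paragraph: confirming that every way $(\star)$ can fail is captured by one of exactly $n-1$ per-branching events, so that the prefactor is $n-1$ and not, say, a count over all pairs of clades. This is precisely the structural input supplied by Propositions~\ref{thm:sufficient} and~\ref{thm:consistency}, so once it is granted the remaining steps are the elementary estimate $\prob[\mathcal{B}_m]\le e^{-k\mincot}$, the union bound, and the substitution.
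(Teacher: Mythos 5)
Your proposal is correct and is essentially the paper's own argument: the paper proves the rate in one line by applying the inequality $1-(1-e^{-k\mincot})^{n-1}\leq (n-1)[e^{-\mincot}]^{k}$ to the success bound $(1-e^{-k\mincot})^{n-1}$ established in the proof of Proposition~\ref{thm:consistency}, which is exactly the computation you perform — your union bound over the $n-1$ per-branching bad events $\mathcal{B}_m$ (each of probability at most $e^{-k\mincot}$ by the conditional rate-$1$ coalescence argument and independence of loci) is the same estimate, as you yourself note in the parenthetical about Bernoulli's inequality. The substitution of $k = \mincot^{-1}\ln((n-1)/\eps)$ is likewise identical, so there is no substantive difference between the two proofs.
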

\begin{proof}
Note that
\begin{equation*}
1 - (1 - e^{-k\mincot})^{n-1} \leq (n-1)[e^{-\mincot}]^k.
\end{equation*}
\end{proof}

\paragraph{Multiple alleles v. multiple loci.}
It is interesting to compare the relative effects of adding more alleles or more
loci on the accuracy of the reconstruction. 
The result in Proposition~\ref{thm:rate} does not address this question.
In fact, it is hard to obtain useful analytic expressions 
for small numbers of genes and alleles. However, the
asymptotic behavior is quite clear. 
Indeed, as was pointed out in~\cite{Rosenberg:02} (see 
also~\cite{MaddisonKnowles:06} for empirical evidence), 
the benefit of adding more alleles eventually wears out. 
This is because the probability of observing any given number
of alleles at the top of a branch is uniformly bounded
in the number alleles existing at the bottom.
More precisely, we have the following result
which is to be contrasted with Proposition~\ref{thm:rate}.
\begin{proposition}[Multiple Alleles: Saturation Effect]
Let $S$ be any species tree on $n$ populations.
Then, there is a $0 < q^* < 1$ (depending only on $S$) such that 
for any number of loci $k > 0$ and
any number of alleles sampled per population, we have
\begin{equation*}
\prob[\mathrm{\ Multilocus\ Discordance\ }] \geq (q^*)^k > 0.
\end{equation*}
In particular, for a fixed number of loci $k > 0$, 
as the number of alleles per populaiton goes to $+\infty$, 
the probability that GLASS correctly reconstructs
$S$ remains bounded away from $1$.
\end{proposition}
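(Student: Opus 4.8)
The plan is to establish the lower bound $\prob[\text{Multilocus Discordance}] \geq (q^*)^k$ by exhibiting a single ``bad event'' that, if it occurs simultaneously at all $k$ loci, forces GLASS to make an error, and then showing that this event has probability bounded away from zero \emph{uniformly} in the number of alleles. The key structural fact, flagged in the paragraph preceding the statement, is that the number of ancestral lineages surviving to the top of a branch does not grow without bound as the number of sampled alleles at the bottom grows: under Kingman's coalescent, the probability that many lineages remain at the top of a branch of fixed length $\tau_e$ is uniformly bounded, regardless of how many lineages entered at the bottom. So adding alleles cannot drive the relevant failure probability to zero.

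First I would pick a specific pair of clades in $S$ whose misreconstruction is possible --- concretely, a branching at the shortest branch (length $\mincot$), or more carefully a configuration where condition $(\star)$ can fail. The natural choice is to locate an internal branch $e$ and consider the event $E_e$ that, in a given gene tree $\gcal^{(i)}$, \emph{no} pair of lineages straddling that branching coalesces before the top of the branch above the relevant MRCA; equivalently, all the alleles from one side fail to find a partner on the other side within the branch. The crucial point is that this event is bounded below by a constant $q_e > 0$ that does not depend on the allele counts $|\alleles{i}{r}|$: even if infinitely many alleles are sampled, the coalescent leaves at least one lineage per population surviving up the branch with probability bounded below, and conditionally two such surviving lineages fail to coalesce in a branch of length $\tau_e$ with a fixed positive probability. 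I would set $q^* = \max_e q_e \in (0,1)$, or more simply take the single-locus failure probability for one chosen branch.

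Next I would assemble the multilocus bound. Since the $k$ loci are \emph{unlinked} (mutually independent by the stochastic model), the event that this single branch's failure $E_e$ occurs at \emph{every} locus simultaneously has probability exactly $\prod_{i \in \loci} \prob[E_e^{(i)}] \geq (q^*)^k$. When $E_e$ holds at all loci, the quantity $\divt{\desc{u}\desc{v}}$ for the two clades $u,v$ meeting at $e$ exceeds $\overt{\mrca{\desc{u}\cup\desc{v}}}$, so $(\star)$ is violated at that branching; one then checks that this forces GLASS to merge an incorrect pair of clusters before the correct one, yielding a topology that disagrees with $S$. This is a clustering-order argument dual to the one in the proof of Proposition~\ref{thm:sufficient}: if the true clade's internal coalescence is pushed above the adjacent branch, some cross-clade pair will be clustered first.

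The main obstacle will be the uniform-in-alleles lower bound on $q^*$, i.e.\ rigorously justifying that sampling more alleles per population cannot make the failure event vanish. The clean way to handle this is to use the exchangeability and consistency of Kingman's coalescent: the restriction of the genealogy to any \emph{fixed} pair of sampled alleles (one per population) is itself a coalescent on two lineages, and the probability that these two designated lineages do not coalesce within a branch of length $\mincot$ is $e^{-\mincot} > 0$, \emph{independent} of how many other alleles are present. Thus one can simply track two tagged lineages and ignore the rest, which gives a clean constant $q^* \geq e^{-\mincot}$ (or a slightly smaller explicit constant once one accounts for the full branching structure leading to misreconstruction) that is manifestly independent of the allele counts. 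The conclusion that the reconstruction probability stays bounded away from $1$ as alleles increase then follows immediately, since $(q^*)^k > 0$ is a fixed positive lower bound on the discordance probability for each fixed $k$.
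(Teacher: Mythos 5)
Your overall skeleton---fix one bad local configuration, use independence of the $k$ unlinked loci to get the $k$-th power, and identify uniformity in the number of alleles as the crux---matches the paper's proof. But your per-locus bad event is too weak, and the ``one then checks'' step cannot be carried out: the implication ``$E_e$ holds at every locus $\Rightarrow$ GLASS errs'' is false, because condition $(\star)$ is sufficient but \emph{not} necessary for correctness (the paper says this explicitly just before Proposition~\ref{thm:sufficient}). Concretely, take one allele per population and three populations with cherry $a,b$ and outgroup $c$: if at every locus the $a$- and $b$-lineages fail to coalesce within the internal branch, but in each gene tree the first coalescence in the ancestral population is nonetheless the $a$--$b$ pair (this has conditional probability $1/3$ per locus by exchangeability), then $\divtime{i}{ab} < \min(\divtime{i}{ac},\divtime{i}{bc})$ at every locus, hence $\divt{ab} < \min(\divt{ac},\divt{bc})$, and GLASS returns the correct tree even though $(\star)$ fails at every locus. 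This is precisely why the paper's per-locus event is not ``failure to coalesce in the branch'' but the event $Z^{(i)}$ that the gene tree restricted to a rooted triple $\{a,b,c\}$ is \emph{discordant}, i.e.\ that the wrong pair coalesces first; given the failure this costs only an extra constant factor $q''$, but it cannot be dropped.

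The step you yourself flag as the main obstacle is then handled with an inequality pointing the wrong way. The event $E_e$ requires \emph{all} cross pairs to avoid coalescing in the window, because $\divt{\cdot}$ is a minimum over all allele pairs and all loci---that minimum is the whole point of GLASS. Hence $E_e$ is \emph{contained} in the event that your two tagged lineages fail to coalesce, and sampling consistency of Kingman's coalescent gives $\prob[E_e] \le e^{-\tau}$, an upper bound rather than the lower bound you need; you cannot ``track two tagged lineages and ignore the rest,'' since an early coalescence of any other cross pair is exactly what rescues GLASS. (Indeed, under the natural coupling, $\prob[E_e]$ is monotonically decreasing in the number of sampled alleles, so no argument insensitive to the allele count can lower bound it by its two-allele value $e^{-\mincot}$.) The uniform-in-alleles bound genuinely requires the bottleneck fact you state in your opening paragraph and then abandon: by Tavar\'e's bound (6.5), with probability at least some $q'>0$ \emph{independent of the sample size}, only one lineage survives to the top of the branch immediately above each of $a$, $b$, $c$; conditioned on these three events there are only three lineages left, and the conditional probability of a discordant triple is at least some $q''>0$ depending only on the internal branch length. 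That is exactly the paper's proof, with $q^* = (q')^3 q''$, and the correct constant is of this form---strictly smaller than $e^{-\mincot}$ in the many-allele limit---not $e^{-\mincot}$ itself.
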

\begin{proof}
Take any three populations $a, b, c$ from $S$. Assume that $a$ and $b$ meet
$T_1$ generations back and that $c$ joins them $T_2$ generations later.
For $w = a,b,c$ and $i\in \loci$, 
let $Y^{(i)}_w$ be the event that in locus $i$ there is only one allele
remaining at the top of the branch immediately above $w$. Let $Z^{(i)}$
be the event that the topology of gene tree $i$ restricted to $\{a,b,c\}$
is topologically discordant with $S$. It follows from bound (6.5) in~\cite{Tavare:84}
that there is $0 < q' < 1$ independent of $h$ such that
\begin{equation*}
\prob[Y^{(i)}_{w}] \geq q',
\end{equation*}
for all $i \in \loci$ and $w \in \{a,b,c\}$. Also, it is clear that 
there is $0 < q'' < 1$ depending on $T_2$
such that
\begin{equation*}
\prob[Z^{(i)}\,|\,Y^{(i)}_w,\ \forall w\in \{a,b,c\}] \geq q'',
\end{equation*} 
for all $i\in \loci$. Therefore, by independence of the loci,
we have
\begin{eqnarray*}
&&\prob[\mathrm{\ Multilocus\ Discordance\ }]\\
&&\qquad\qquad\qquad\qquad \geq 
\prod_{i\in \loci}\prob[Z^{(i)}\,|\,Y^{(i)}_w,\ \forall w\in \{a,b,c\}]\prod_{w\in\{a,b,c\}}\prob[Y^{(i)}_{w}]\\
&&\qquad\qquad\qquad\qquad \geq ((q')^3 q'')^k.
\end{eqnarray*}
Take $q^* = (q')^3 q''$. That concludes the proof.
\end{proof}

\section{Tolerance to Estimation Error}\label{section:noisy}

The results of the previous section are somewhat unrealistic in that they assume
that GLASS is given \emph{exact} estimates of coalescence times. In this
section, we relax this assumption. 

Assume that the input to the algorithm is now a set of \emph{estimated} coalescence
times 
\begin{equation*}
\left\{\left(\estdivtime{i}{ab}\right)_{a,b\in \lcal^{(i)}}\right\}_{i\in \loci},
\end{equation*}
and, for all $A, B \subseteq L$, let 
\begin{equation*}
\estdivt{AB} = \min\left\{
\estdivtime{i}{ab}\ :\ i\in \loci, r\in A, s\in B, a\in \alleles{i}{r}, 
b\in \alleles{i}{s}
\right\},
\end{equation*}
be the corresponding estimated intercluster coalescence times computed
by GLASS. Assume further that there is a $\delta > 0$ such that
\begin{equation*}
\left|\estdivtime{i}{ab} - \divtime{i}{ab}\right| \leq \delta,
\end{equation*}
for all $i\in \loci$ and $a,b\in \lcal^{(i)}$. In particular, note that
\begin{equation*}
\left|\estdivt{AB} - \divt{AB}\right| \leq \delta,
\end{equation*} 
for all $A, B \subseteq L$.

Let $\mint$ be the shortest branch length in number of generations, that is,
\begin{equation*}
\mint = \min\{t_e\ :\ e \in E\}.
\end{equation*}
%Denote $\widehat \mint = \mint - 2\delta$. 
We extend our combinatorial condition
$(\star)$ to
\begin{equation*}
(\hat\star) \quad \forall u,v \in V,\ 
\undert{\mrca{\desc{u}\cup \desc{v}}} 
\leq \divt{\desc{u} \desc{v}} 
< \overt{\mrca{\desc{u}\cup \desc{v}}} - 2\delta.
\end{equation*}  
Then, we get the following.
\begin{proposition}[Sufficient Condition:Noisy Case]\label{thm:sufficient-noisy}
Assume that
\begin{equation}\label{eq:m3}
\delta < \frac{\mint}{2},
\end{equation}
and that $(\hat\star)$ is satisfied. Then, GLASS returns the correct species tree.
\end{proposition}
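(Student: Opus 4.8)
The plan is to follow the proof of Proposition~\ref{thm:sufficient} essentially verbatim, inserting the estimation error $\delta$ at exactly the two places where intercluster times are compared. As before, I would prove by induction on the execution time of GLASS that every newly created set $B$ in the running partition $Q$ is a clade of $S$, i.e. $B = \desc{\mrca{B}}$; the proposition is immediate from this claim. The essential difference is that GLASS now merges the pair minimizing the \emph{estimated} quantity $\estdivt{AB}$ rather than $\divt{AB}$, so the contradiction I derive must be stated for $\estdivt{\cdot}$, and I must convert between $\estdivt{AB}$ and $\divt{AB}$ using the standing bound $|\estdivt{AB} - \divt{AB}| \le \delta$.

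The base case and the combinatorial skeleton of the inductive step are unchanged. Assuming $B = B' \cup B''$ with $B',B''$ clades by induction but $B$ not a clade, I would take the child clades $C' = \desc{\leftd{\mrca{B}}}$ and $C'' = \desc{\rightd{\mrca{B}}}$ of $\mrca{B}$, arrange $B' \subseteq C'$, $B'' \subseteq C''$ with $B' \subset C'$ strict, and locate a set $X$ in the partition at time $T$ meeting $C' - B'$; laminarity of clades forces $B' \cup X \subseteq C'$ and hence $\overt{\mrca{B'\cup X}} \le \overt{\leftd{\mrca{B}}} = \undert{\mrca{B}}$, exactly as in the noiseless case. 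The only genuinely new computation replaces the chain (\ref{eq:dbx}) by its noisy analogue: exploiting the $2\delta$ margin built into $(\hat\star)$ and the bound at both ends, I would write
\begin{equation*}
\estdivt{B'X} \le \divt{B'X} + \delta < \overt{\mrca{B'\cup X}} - \delta \le \undert{\mrca{B'\cup B''}} - \delta \le \divt{B'B''} - \delta \le \estdivt{B'B''},
\end{equation*}
where the strict inequality is the second clause of $(\hat\star)$ applied to the clades $B',X$, and the last inequality is $\estdivt{B'B''} \ge \divt{B'B''} - \delta$. This gives $\estdivt{B'X} < \estdivt{B'B''}$, so GLASS would have preferred merging $B'$ with $X$ over $B'$ with $B''$, the desired contradiction.

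The role of hypothesis (\ref{eq:m3}) is to ensure $(\hat\star)$ is not vacuous: combining its two clauses yields $\undert{\mrca{\cdots}} < \overt{\mrca{\cdots}} - 2\delta$, i.e. $t_e > 2\delta$ on the branch above each relevant MRCA, which is precisely $\delta < \mint/2$ for the shortest such branch; I would remark that every branch invoked in the chain is internal, since it sits above $\mrca{B}$ and $|B| \ge 2$, so the margin is genuinely available. The main obstacle is simply the bookkeeping of the two $\delta$-shifts — one $+\delta$ to pass from $\estdivt{B'X}$ to $\divt{B'X}$, one $-\delta$ to pass from $\divt{B'B''}$ back to $\estdivt{B'B''}$ — which together consume exactly the $2\delta$ slack of $(\hat\star)$, while the intermediate equalities are purely combinatorial statements about exact times in $S$ and introduce no further error. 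Verifying that this accounting is tight is all that separates the argument from Proposition~\ref{thm:sufficient}.
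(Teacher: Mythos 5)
Your proposal is correct and follows essentially the same route as the paper: the paper likewise reduces to the inductive argument of Proposition~\ref{thm:sufficient}, replacing the chain (\ref{eq:dbx}) by exactly the $\delta$-shifted chain of inequalities you wrote, with condition (\ref{eq:m3}) invoked only to guarantee that $(\hat\star)$ is satisfiable. Your version is in fact more detailed than the paper's one-line reduction, and your accounting of the two $\delta$-shifts consuming the $2\delta$ slack in $(\hat\star)$ is exactly right.
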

\begin{proof}
The proof follows immediately from the argument in Proposition~\ref{thm:sufficient}
by noting that equation (\ref{eq:dbx}) becomes
\begin{eqnarray*}
\estdivt{B'X} 
&\leq& \divt{B'X} + \delta\\
&<& \overt{\mrca{B'\cup X}} - \delta\\
&\leq& \undert{\mrca{B' \cup B''}} - \delta\\
&\leq& \divt{B' B''} - \delta\\
&\leq& \estdivt{B'B''}.
\end{eqnarray*}
Condition (\ref{eq:m3}) ensures that $(\hat\star)$
is satisfiable.
\end{proof}
Moreover, we have immediately:
\begin{proposition}[Consistency \& Rate: Noisy Case]\label{thm:consistency-noisy}
Assume that
\begin{equation*}
\delta < \frac{\mint}{2}.
\end{equation*}
Then GLASS is statistically consistent. Moreover, let
\begin{equation*}
\consrate = \frac{m - 2\delta}{m},
\end{equation*}
then it holds that
\begin{equation*}
\prob[\mathrm{\ Incorrect\ Reconstruction\ }] \leq (n-1)[e^{-\mincot\consrate}]^k.
\end{equation*}
In particular, for any $\eps > 0$, taking
\begin{equation*}
k = \frac{1}{\mincot\consrate} \ln\left(\frac{n-1}{\eps}\right),
\end{equation*} 
we get
\begin{equation*}
\prob[\mathrm{\ Incorrect\ Reconstruction\ }] \leq \eps.
\end{equation*}
\end{proposition}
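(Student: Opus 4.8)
The plan is to reduce to Proposition~\ref{thm:sufficient-noisy} and then reuse, essentially verbatim, the probabilistic estimate from the proof of Proposition~\ref{thm:consistency}, with the sole modification that the strengthened clearance of $2\delta$ generations demanded by $(\hat\star)$ is folded into the branch lengths. By Proposition~\ref{thm:sufficient-noisy}, whenever $\delta < m/2$ and $(\hat\star)$ holds, GLASS returns the correct tree; hence it suffices to upper bound the probability that $(\hat\star)$ \emph{fails}. Note that $\delta < m/2$ guarantees that for every branch the admissible interval is nonempty, so $(\hat\star)$ is satisfiable.

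First I would run the coalescent processes of all $k$ loci simultaneously, exactly as in the proof of Proposition~\ref{thm:consistency}. At each of the $n-1$ branchings of $S$, and in each locus separately, I pick one allele from each of the two merging sides. Condition $(\hat\star)$ can fail at a given branching only if, in \emph{every} locus, the chosen pair fails to coalesce more than $2\delta$ generations before the top of the branch immediately above the corresponding MRCA.

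The one genuinely new step is to convert the fixed additive clearance of $2\delta$ generations into a uniform \emph{multiplicative} reduction of branch length in coalescent time units. For a branch $e$ with $t_e$ generations and population size $N_e$, the admissible window has length $t_e - 2\delta$ generations, that is
\begin{equation*}
\frac{t_e - 2\delta}{2N_e} = \tau_e\left(1 - \frac{2\delta}{t_e}\right) \geq \mu\,\Lambda
\end{equation*}
coalescent time units, where I used $\tau_e \geq \mu$ together with $t_e \geq m$ (so that $1 - 2\delta/t_e \geq 1 - 2\delta/m = \Lambda$). Since two lineages coalesce at rate $1$ conditionally on the past and the $k$ loci are independent, the probability that the chosen pairs all fail to coalesce within this window is at most $(e^{-\mu\Lambda})^k = e^{-k\mu\Lambda}$. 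A union bound over the $n-1$ branchings then gives
\begin{equation*}
\prob[\text{Incorrect Reconstruction}] \leq (n-1)\,[e^{-\mu\Lambda}]^k,
\end{equation*}
which is the asserted rate; statistical consistency is immediate on letting $k \to +\infty$ with $n$, $\mu$ and $\Lambda$ held fixed. The concluding ``in particular'' is then the same elementary algebra as in Proposition~\ref{thm:rate}: substituting $k = (\mu\Lambda)^{-1}\ln((n-1)/\eps)$ makes the right-hand side exactly $\eps$.

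The step I would check most carefully is the branch-length conversion displayed above, which is the only place where the new quantity $\Lambda$ enters. The subtlety is that $\mu$ is the smallest branch length measured in coalescent units whereas $m$ is the smallest measured in generations, and these two minima need not be attained on the same branch. The estimate nonetheless goes through because the factors $\tau_e \geq \mu$ and $1 - 2\delta/t_e \geq \Lambda$ are bounded \emph{separately} and only then multiplied, so no alignment of the two minimizing branches is needed. Everything else is a line-for-line repetition of the noise-free argument.
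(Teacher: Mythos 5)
Your proposal is correct and follows essentially the same route the paper intends: the paper states this result with no explicit proof (``we have immediately''), meaning exactly your combination of Proposition~\ref{thm:sufficient-noisy} with the coalescent/union-bound argument of Propositions~\ref{thm:consistency} and~\ref{thm:rate}, where the $2\delta$ buffer shortens each branch from $t_e$ to $t_e-2\delta$ generations. Your explicit verification that $\tau_e(1-2\delta/t_e)\geq \mincot\consrate$ by bounding the two factors separately (since the branch minimizing $\tau_e$ and the branch minimizing $t_e$ need not coincide) is precisely the detail the paper leaves implicit, and it is handled correctly.
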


\section{Generalization}\label{section:generalization}

The basic observation underlying our approach is that distances between
populations may be estimated correctly using the minimum divergence time
among all individuals and all genes.

Actually, this observation may be used in conjunction with any distance-based
reconstruction algorithm. (See e.g.~\cite{Felsenstein:04,SempleSteel:03}
for background on distance matrix methods.) 
This can be done under very general assumptions as
we discuss next. First, we do away with the molecular clock assumption.
Indeed, it turns out that $\divtime{i}{ab}$ need not be the
divergence time between $a$ and $b$ for gene $i$.
Instead, we take $\divtime{i}{ab}$ to be the molecular distance 
between $a$ and $b$ in gene $i$, that is,
the time elapsed from the divergence point 
to $a$ and $b$ integrated against the rate of mutation.
We require that the rate of mutation be the same for all genes and all
individuals in the same branch of the species tree, but we allow rates to 
differ across branches. 
Below, all quantities of the type
$\divt{},\estdivt{}{}$ etc.~are given in terms of this
molecular distance. 

For any two clusters $A,B \subseteq L$, we define
\begin{equation}\label{eq:newest}
\estdivt{AB} = \min\left\{
\estdivtime{i}{ab}\ :\ i\in \loci, r\in A, s\in B, a\in \alleles{i}{r},
b\in \alleles{i}{s}
\right\},
\end{equation}
as before. 
Let
\begin{equation*}
\mint' = \min\{t_e\rho_e\ :\ e \in E\},
\end{equation*}
where $\rho_e$ is the rate of mutation on branch $e$.
%Note that under conditions of the form
%\begin{equation*}
%\forall u,v \in L,\
%\undert{\mrca{\desc{u}\cup \desc{v}}}
%\leq \divt{\desc{u} \desc{v}}
%< \undert{\mrca{\desc{u}\cup \desc{v}}} + \frac{\mint'}{8},
%\end{equation*}
%and
%\begin{equation*}
%\delta \leq \frac{\mint'}{8},
%\end{equation*}
It is easy to generalize condition $(\hat\star)$
so that we can use (\ref{eq:newest}) to estimate all molecular distances
between pairs of populations up to an additive error of, say, $\mint'/4$. 
Then using
standard four-point methods, 
we can reconstruct the species tree correctly.

Note furthermore that by the results of~\cite{ErStSzWa:99a}, 
it suffices in fact to estimate distances between pairs of populations 
that are ``sufficiently close.'' We can derive 
consistency conditions which guarantee the reconstruction of the correct species tree
in that case as well.

\bibliographystyle{alpha}
\bibliography{thesis}

\end{document}